\documentclass[lettersize,journal]{IEEEtran}
\usepackage{amsthm}
\usepackage{amsmath,amssymb,amsfonts}
\usepackage[linesnumbered,ruled,vlined]{algorithm2e}
\usepackage{array}
\usepackage{xcolor}
\usepackage{textcomp}
\usepackage{subfig}
\usepackage{stfloats}
\usepackage{url}
\usepackage{verbatim}
\usepackage{graphicx}
\usepackage{makecell}
\usepackage{multirow}
\usepackage{cite}
\usepackage{booktabs}
\usepackage{diagbox}

\newtheorem{remark}{Remark}

\newtheorem{proposition}{Proposition}

\makeatletter
\renewcommand{\citepunct}{,\penalty\@m\hskip.13em plus.1em minus.1em}
\renewcommand{\citedash}{\hbox{--}\penalty\@m}
\makeatother

\usepackage{etoolbox}
\makeatletter
\def\IEEEtitletopspace{-0.1in}

\newcommand{\patchwarning}[1]{\typeout{*** PATCH WARNING: #1 ***}}

\patchcmd{\@maketitle}
  {\vskip 1.0em\par}
  {\vskip 0.1em\par}
  {}{\patchwarning{Pattern '\string\vskip 1.0em\string\par' not found in \string\@maketitle}}

\patchcmd{\@maketitle}
  {\par \addvspace {0.5\baselineskip }}
  {\par \addvspace {-1.6\baselineskip }}
  {}{\patchwarning{Pattern '\string\par \string\addvspace\space\{0.2\string\baselineskip\space\}' not found in \string\@maketitle}}
\makeatother

\begin{document}
\setlength{\textfloatsep}{8pt}
\setlength{\floatsep}{6pt}
\setlength{\intextsep}{8pt}

\title{Conjugate Equivariant Neural Network
for Precoder Learning}

\author{Shiyong~Chen,~\IEEEmembership{Student Member,~IEEE,} Mingyu~Deng,~\IEEEmembership{Student Member,~IEEE,}
        and Shengqian~Han,~\IEEEmembership{Senior Member,~IEEE}%
\thanks{The authors are with the School of Electronics and Information Engineering, Beihang University, Beijing 100191, China (Email: \{shiyongchen, mingyudeng, sqhan\}@buaa.edu.cn).}%
}



 \maketitle
\vspace{-3cm}
\begin{abstract}
Exploiting mathematical properties of wireless policies in deep neural network (DNN) design can improve learning performance and generalizability while reducing training complexity. Permutation equivariance and permutation invariance have been incorporated into DNN architectures. In this paper, we investigate conjugation equivariance (CE) and propose a general conjugation-equivariant neural network (CENN) framework for precoder learning. We first establish that the optimal policies for a unified class of precoding problems satisfy CE, i.e., when the channel matrices are conjugated, the conjugate of an optimal precoder remains optimal. We then show that, for DNNs with linear processing functions, enforcing CE restricts their ability to learn optimal precoding policies.
To overcome this limitation, we develop a general nonlinear
construction and prove that it converts an arbitrary base function into a CE processing function while preserving the base function’s original equivariance properties. This construction enables existing equivariant networks to incorporate CE without adding learnable parameters. Simulations for fully digital and RIS-aided precoding show that the resulting CE-enhanced networks improve learning and generalization performance while requiring fewer training samples and shorter training time than their original counterparts.
\end{abstract}
\vspace{-0.2cm}
\begin{IEEEkeywords}
Precoding, deep learning, conjugation equivarianc, equivariant neural network.
\end{IEEEkeywords}

\addtolength{\topmargin}{-0.2cm}
\addtolength{\textheight}{0.40cm}
\vspace{-0.3cm}
\section{Introduction}
\label{sec:introduction}
\IEEEPARstart{P}{recoder} optimization is essential for improving the spectral efficiency (SE) of wireless communication systems. Conventional numerical optimization methods, including the weighted minimum mean square error (WMMSE) algorithm~\cite{An_Iteratively_Weighted} and block coordinate descent (BCD) methods~\cite{Weighted_Sum_Rate}, require iterative updates and therefore incur substantial online computational cost. Deep learning offers an alternative: once trained offline, a deep neural network (DNN) can predict precoders at low online computational cost. Nevertheless, conventional DNNs often incur substantial training complexity, including large number of learnable parameters, high sample requirements, and long training times.  Moreover, their limited generalizability may also necessitate retraining when the system
configuration changes~\cite{Learning_to_Optimize}.

Incorporating mathematical properties of a precoding policy into the DNN architecture as inductive biases can improve learning performance and generalizability while reducing training complexity~\cite{Equivariance_Through}. Existing studies have introduced such biases by exploiting the equivariance and invariance properties of wireless policies~\cite{Understanding_the_Performance, MDGNN, Learning_Beamforming_for,Learn_to_Optimize}. In~\cite{Understanding_the_Performance}, permutation equivariance (PE) was incorporated into a graph neural network (GNN) for learning fully digital beamforming. For hybrid precoding, a multidimensional GNN was designed in~\cite{MDGNN} to exploits multidimensional PE properties. In addition, permutation invariance (PI) was incorporated into GNNs for power allocation in~\cite{Learn_to_Optimize}. By exploiting these mathematical properties, the resulting networks improve learning and generalization performance while reducing training complexity.

Beyond PE and PI, recent studies have investigated unitary equivariance (UE) and torus equivariance (TE) in precoder learning~\cite{Precoder_Learning,Precoder_Learning_in_RIS}. The UE property means that applying a unitary transformation to the channel vectors induces the same transformation on the corresponding optimal precoding vectors. Since permutation matrices form a subset of unitary matrices, UE provides a stronger equivariance property than PE. Accordingly, exploiting UE in addition to PE can further reduce training complexity and improve generalizability. The TE property arises in reconfigurable intelligent surfac (RIS)-aided precoding and means that independently rotating the channel associated with each reflecting element induces a corresponding phase rotation of the optimal RIS precoder. By jointly exploiting UE, TE, and PE, a unitary-, torus-, and permutation-equivariant neural network was developed in~\cite{Precoder_Learning_in_RIS}.

Although various equivariance and invariance properties have been incorporated into the DNN design, one property exhibited in precoding policies has been overlooked: conjugation equivariance (CE). The CE property means that when the channel matrices are conjugated, the conjugates of the corresponding optimal precoding matrices remain optimal. To the best of our knowledge, CE has not been explicitly identified for precoder learning or jointly exploited with PE, UE, and TE in existing DNNs.

In this paper, we develop a general conjugation-equivariant neural network (CENN) framework for precoder learning. We first establish that the optimal policies for a unified class of precoding problems satisfy CE. We then show that, for DNNs with linear processing functions, enforcing CE restricts the weights and biases to real values and prevents cross-coupling between real and imaginary components, thereby limiting the DNNs’ ability to learn optimal precoding policies. To overcome this limitation, we develop a general nonlinear construction and prove that it converts an arbitrary base function into a CE processing function while preserving the base function’s original equivariance properties. This construction enables existing equivariant networks to incorporate CE without adding learnable parameters. Simulations for fully digital and RIS-aided precoding show that the resulting CE-enhanced networks improve learning and generalization performance while requiring fewer training samples and shorter training time than their original counterparts.

\vspace{-0.2cm}
\section{System Model}
\label{sec:system_model}
Consider a downlink system where a base station (BS) with
\(N_t\) antennas serves \(K\) single-antenna users. Let \(\mathcal H\) and \(\mathcal W\) denote the collections of channel and precoding matrices, respectively. The sum-rate maximization problem under the system constraints can be uniformly formulated as
\begingroup
\allowdisplaybreaks[4]
\begin{subequations}
\label{eq:unified_problem}
\begin{align}
    \max_{\mathcal W} \quad
    & \sum_{k=1}^{K}\log_2\bigg(
    1+
    \frac{
        \left|f_{k,k}(\mathcal H,\mathcal W)\right|^2
    }{
        \sum\nolimits_{j=1,j\ne k}^{K}
        \left|f_{k,j}(\mathcal H,\mathcal W)\right|^2
        +\delta^2
    }
    \bigg)
    \label{eq:P1_objective}
    \\
    \mathrm{s.t.} \quad
    & \left|\varphi_m(\mathcal W)\right|^2
    \leq I_m,
    \quad m=1,\ldots,M,
    \label{eq:P1_constraint_phi}
    \\
    &
    \left|\eta_n(\mathcal W)\right|^2
    =E_n,
    \quad n=1,\ldots,N.
    \label{eq:P1_constraint_eta}
\end{align}
\end{subequations}
\endgroup
where \(f_{k,j}(\mathcal H,\mathcal W)\) denotes a desired complex received signal or interference term, \(\varphi_m(\cdot)\) and \(\eta_n(\cdot)\) define the inequality and equality constraints on precoders, respectively, and $\delta^2$ is the noise~power.

In typical precoding problems, the functions \(f_{k,j}(\mathcal H,\mathcal W)\), \(\varphi_m(\mathcal W)\) and \(\eta_n(\mathcal W)\) satisfy the following CE property: conjugating their inputs conjugates the corresponding outputs,
\begin{subequations}
\label{eq:unified_ce_condition}
\begin{align}
    f_{k,j}(\mathcal H^{*},\mathcal W^{*})&=f_{k,j}(\mathcal H,\mathcal W)^{*},\\
    \varphi_m(\mathcal W^{*}) &=\varphi_m(\mathcal W)^{*},\\
    \eta_n(\mathcal W^{*}) &=\eta_n(\mathcal W)^{*}.
\end{align}
\end{subequations}

The following representative precoding problems instantiate the unified formulation, and their associated functions satisfy the CE relations in~\eqref{eq:unified_ce_condition}.
\begin{itemize}
    \item \textbf{Fully digital precoding:}
    \(\mathcal H=\mathbf H\) and
    \(\mathcal W=\mathbf W\), where $\mathbf H=[\mathbf h_1,\ldots,\mathbf h_K]\in\mathbb{C}^{N_t\times K}$ and $\mathbf W=[\mathbf w_1,\ldots,\mathbf w_K]\in\mathbb{C}^{N_t\times K}$ denotes the channel and precoding matrices from the BS to the users, respectively. The desired signal and interference terms are $f_{k,j}(\mathcal H,\mathcal W)=\mathbf h_k^{\mathsf H}\mathbf w_j$. Let $P_{\max}$ denote the maximum transmit power. The power constraint follows by setting \(\varphi_1(\mathcal W)=\rm{vec}(\mathbf W)\) and $I_1=P_{\max}$.

    \item \textbf{RIS-aided precoding:}
    \(\mathcal H=(\mathbf H,\mathbf A_1,\ldots,\mathbf A_K)\) and
    \(\mathcal W=(\mathbf W,\mathbf w_{\rm e})\), where $\mathbf A_k\in\mathbb C^{N_{t}\times N_{\rm e}}$ denotes the equivalent channel matrix from the BS to the $k$-th user vis the RIS, $\mathbf w_{\rm e}\in\mathbb C^{N_{\rm e}\times 1}$ is the RIS precoder, and $N_{\rm e}$ is the number of reflecting elements. The desired signal or interference term is $f_{k,j}(\mathcal H,\mathcal W)=\left(\mathbf h_{d,k} + \mathbf A_k\mathbf w_{\rm e}^{*} \right)^{\mathsf H} \mathbf w_j$. The transmit power constraint follows by setting \(\varphi_1(\mathcal W)=\rm{vec}(\mathbf W)\) and $I_1=P_{\max}$, whereas the RIS unit-modulus constraint follows by setting \(\eta_m(\mathcal W)=[\mathbf w_{\rm e}]_m\) and $e_{m}=1$.
\end{itemize}

\vspace{-0.3cm}
\section{CE Property and Design of CENN}
In this section, we first establish the CE property of the
precoding policy. We then present a general construction that converts existing equivariant networks into CENNs.

\subsection{CE Property of the Precoding Policy}
The optimal precoding policy can be viewed as a mapping from the channel matrices $\mathcal{H}$ to the optimal precoding matrices $\hat{\mathcal{W}}$, expressed as $\hat{\mathcal{W}}=\mathcal F(\mathcal{H})$. 
The following proposition establishes that the
optimal precoding policy satisfies the CE property.

\begin{proposition}
\label{proposition:CE_property}
If \(\hat{\mathcal{W}}\) be an optimal solution to problem~\eqref{eq:unified_problem} for the channel matrices $\mathcal{H}$, the corresponding conjugated precoder \(\hat{\mathcal{W}}^{*}\) is also an optimal solution to problem~\eqref{eq:unified_problem} for conjugated input \(\hat{\mathcal{H}}^{*}\), as expressed~by
\begin{equation}
    \hat{\mathcal{W}}^{*} =    \mathcal F(\hat{\mathcal{H}^{*}}).
    \label{eq:CE}
\end{equation}
\end{proposition}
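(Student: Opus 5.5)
The plan is to show that complex conjugation is a bijection between the feasible set of problem~\eqref{eq:unified_problem} for $\mathcal H$ and the feasible set for $\mathcal H^{*}$, and that it preserves the objective value. Optimality of $\hat{\mathcal W}^{*}$ for $\mathcal H^{*}$ then follows from a short contradiction argument. The only tools needed are the CE relations in~\eqref{eq:unified_ce_condition} and the fact that $|z^{*}|=|z|$ for any complex $z$. Here I read the statement's $\hat{\mathcal H}^{*}$ as $\mathcal H^{*}$.

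\textbf{Step 1 (feasibility).} Take any $\mathcal W$ that is feasible for $\mathcal H$. Since the constraints do not involve $\mathcal H$, I only need to check $\mathcal W^{*}$ against~\eqref{eq:P1_constraint_phi} and~\eqref{eq:P1_constraint_eta}. By~\eqref{eq:unified_ce_condition},
\[
|\varphi_m(\mathcal W^{*})|^2=|\varphi_m(\mathcal W)^{*}|^2=|\varphi_m(\mathcal W)|^2\le I_m ,
\]
and similarly $|\eta_n(\mathcal W^{*})|^2=E_n$. So $\mathcal W^{*}$ is feasible. Because conjugation is an involution, the same argument applied in the other direction shows that conjugation maps the feasible set onto itself bijectively. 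Where $\varphi_m$ is vector-valued, as with $\mathrm{vec}(\mathbf W)$, the modulus is read as the Euclidean norm; the identity still holds entrywise, so the argument is unchanged.

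\textbf{Step 2 (objective invariance).} Let $R(\mathcal H,\mathcal W)$ denote the objective in~\eqref{eq:P1_objective}. By the first CE relation, $|f_{k,j}(\mathcal H^{*},\mathcal W^{*})|^2=|f_{k,j}(\mathcal H,\mathcal W)^{*}|^2=|f_{k,j}(\mathcal H,\mathcal W)|^2$ for every $k$ and $j$. The objective depends on $\mathcal H$ and $\mathcal W$ only through these squared moduli, so every SINR term is unchanged and
\[
R(\mathcal H^{*},\mathcal W^{*})=R(\mathcal H,\mathcal W)
\]
for all $\mathcal W$.

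\textbf{Step 3 (optimality).} Suppose, for contradiction, that some feasible $\tilde{\mathcal W}$ satisfies $R(\mathcal H^{*},\tilde{\mathcal W})>R(\mathcal H^{*},\hat{\mathcal W}^{*})$. By Step~1, $\tilde{\mathcal W}^{*}$ is feasible. Applying Step~2 with $\mathcal H^{*}$ in place of $\mathcal H$ gives
\[
R(\mathcal H,\tilde{\mathcal W}^{*})=R(\mathcal H^{*},\tilde{\mathcal W})>R(\mathcal H^{*},\hat{\mathcal W}^{*})=R(\mathcal H,\hat{\mathcal W}).
\]
This contradicts the optimality of $\hat{\mathcal W}$ for $\mathcal H$. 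Hence $\hat{\mathcal W}^{*}$ is optimal for $\mathcal H^{*}$, which is~\eqref{eq:CE}. If the problem has several optima, $\mathcal F$ is read as selecting one of them consistently.

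\textbf{Main obstacle.} The argument is essentially bookkeeping. The only point needing care is confirming that the CE relations~\eqref{eq:unified_ce_condition} really hold for the listed instances. In the fully digital case, $(\mathbf h_k^{*})^{\mathsf H}\mathbf w_j^{*}=(\mathbf h_k^{\mathsf H}\mathbf w_j)^{*}$. In the RIS case, conjugating $\mathbf w_{\rm e}$ inside the term $\mathbf A_k^{*}(\mathbf w_{\rm e}^{*})^{*}$ still yields the conjugate of $\mathbf A_k\mathbf w_{\rm e}^{*}$. I would check both instances explicitly and note that Steps~1--3 then apply without further change.
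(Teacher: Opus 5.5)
Your proof is correct and follows the same route as the paper's: a contradiction argument built on the invariance $R(\mathcal H^{*},\mathcal W^{*})=R(\mathcal H,\mathcal W)$ of the sum rate under joint conjugation. Your Step~1 improves on the paper's proof, which never verifies that $\widetilde{\mathcal W}^{*}$ is feasible for $\mathcal H$ (nor that $\hat{\mathcal W}^{*}$ is feasible for $\mathcal H^{*}$); that check is exactly where the CE relations for $\varphi_m$ and $\eta_n$ are used.
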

\begin{IEEEproof}
See Appendix~\ref{appendix:CE_property}.
\end{IEEEproof}

Proposition~\ref{proposition:CE_property} motivates the design of CENN whose input-output mapping preserves the CE property.

\subsection{Design of CENN}
The equivariant property of DNN can be ensured by designing equivariant layers that preserve the desired equivariance property~\cite{What_is_an}. Thus, the design of CENN can be reduced to design each layer of it to satisfy the CE property. 

Let \(\mathbf d\in\mathbb C^{C\times 1}\) and \(\mathbf d^{\prime}\in\mathbb C^{C^{\prime}\times 1}\) denote the input and output of one layer in CENN, where \(C\) and \(C^{\prime}\) are their hidden representation dimension. The update equation from \(\mathbf d\) to \(\mathbf d^{\prime}\) can be expressed~as~\cite{Precoder_Learning}
\begin{equation}
    {\mathbf d}^{\prime}=\sigma
    \big(\psi({\mathbf d})\big),
    \label{eq:layer_update}
\end{equation}
where $\psi(\cdot)$ is processing function with learnable parameters, and \(\sigma(\cdot)\) is an complex-valued activation function. 

The layer~\eqref{eq:layer_update} satisfying the CE property indicates that the following condition must hold: if the input $\mathbf d$ is conjugated as $\mathbf d^{*}$, then the output is conjugated as $(\mathbf d^{\prime})^{*}$, which is expressed~as
\begin{equation}
\sigma(\psi({\mathbf d}))^*=\sigma\big(\psi({\mathbf d}^*)\big).
    \label{eq:ce_layer_condition}
\end{equation}

This equivariance condition require both the processing function \(\psi(\cdot)\) and the activation function \(\sigma(\cdot)\) are conjugation equivariant~\cite{What_is_an}. In the following, we first study the design of \(\psi(\cdot)\), and then chose a conjugate equivariant \(\sigma(\cdot)\).

\subsubsection{Design of Linear Processing Function \(\psi(\cdot)\)}
The linear processing function \(\psi(\cdot)\) is defined as~\cite{Precoder_Learning}
\begin{equation}\label{eq:linear_processing_function}
    \psi({\mathbf d})=\mathbf{W}{\mathbf d}+\mathbf b
\end{equation}
where $\mathbf{W}\in\mathbb{C}^{C^{\prime}\times C}$ and $\mathbf{b}\in\mathbb{C}^{C^{\prime}\times 1}$ denote the learnable weight matrix and bias, respectively.

To preserve the CE property, the linear processing function in
\eqref{eq:linear_processing_function} should satisfy
\begin{equation}
    \psi(\mathbf d^{*})=\psi(\mathbf d)^{*}.
    \label{eq:linear_processing_function_CE}
\end{equation}
Substituting~\eqref{eq:linear_processing_function} into
\eqref{eq:linear_processing_function_CE} gives
\begin{equation}
    \mathbf W\mathbf d^{*}+\mathbf b
    =
    \mathbf W^{*}\mathbf d^{*}+\mathbf b^{*}.
    \label{eq:WD_condition}
\end{equation}
Since~\eqref{eq:WD_condition} must hold for arbitrary
\(\mathbf d^{*}\), the learnable parameters must satisfy
\begin{equation}
    \mathbf W=\mathbf W^{*},\quad \mathbf b=\mathbf b^{*}.
    \label{eq:W_condition}
\end{equation}
Therefore, \(\mathbf W\) and \(\mathbf b\) are restricted to be real-valued.

When \(\mathbf W\) and \(\mathbf b\) are real-valued, the output
\(\mathbf z=\psi(\mathbf d)\) in~\eqref{eq:linear_processing_function}
can be decomposed as
\begin{subequations}
\label{eq:Re_Img}
\begin{align}
    \mathrm{Re}\{\mathbf z\}
    &=
    \mathbf W\mathrm{Re}\{\mathbf d\}+\mathbf b,\\
    \mathrm{Im}\{\mathbf z\}
    &=
    \mathbf W\mathrm{Im}\{\mathbf d\}.
\end{align}
\end{subequations}

\begin{remark}
Equation~\eqref{eq:Re_Img} shows that the real and imaginary parts of the output are independently determined by the corresponding components of the input, with no cross coupling between them. This decoupled structure is restrictive for precoder learning, because the real and imaginary parts of the optimal precoder depend on both the real and imaginary parts of the channel matrix.
For example, in
fully digital precoding, the optimal structure is related to \((\mathbf H\mathbf H^{\mathsf H}+\nu\mathbf I_{N_t})^{-1}\mathbf H\),
where \(\nu\) is a Lagrange multiplier associated with the power
constraint~\cite{Optimal_Structure}.
\end{remark}

Thus, although linear processing functions are widely used
for precoder learning, the real-valued parameter constraints required by CE prevent these layers from capturing the coupling
between the real and imaginary components.

\subsubsection{Design of Nonlinear Processing Functions}
To overcome the limitation of linear processing, nonlinear processing functions are needed to capture the coupling between the real and imaginary components. For linear processing functions, CE can be enforced by directly constraining the learnable weights \(\mathbf W\) and \(\mathbf b\) due to their fixed affine form in \eqref{eq:linear_processing_function}. In contrast, for general nonlinear processing functions, direct parameter constraints are less tractable. We therefore provide a general function-level construction that guarantees CE.

\begin{proposition}
\label{proposition_nonlinear_ce}
A function \(\psi(\cdot)\) satisfies the CE property, i.e.,
\begin{equation}
    \psi(\mathbf d^{*})=\psi(\mathbf d)^{*},
    \label{eq:psi_ce_property}
\end{equation}
if and only if there exists a base function \(\phi(\cdot)\) such that
\begin{equation}
    \psi(\mathbf d)
    =
    \frac{1}{2}
    \left[
    \phi(\mathbf d)
    +
    \phi(\mathbf d^{*})^{*}
    \right].
    \label{eq:nonlinear_ce_property}
\end{equation}
\end{proposition}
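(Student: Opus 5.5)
The plan is to prove the two directions of Proposition~\ref{proposition_nonlinear_ce} separately. Both rest on one fact: complex conjugation is an involution, so \((\mathbf d^{*})^{*}=\mathbf d\) and \((\mathbf a+\mathbf b)^{*}=\mathbf a^{*}+\mathbf b^{*}\) for vectors, and conjugation commutes with multiplication by the real scalar \(\tfrac12\). No structure on \(\phi(\cdot)\) is assumed: it may be any map from \(\mathbb C^{C\times 1}\) to \(\mathbb C^{C'\times 1}\), nonlinear and with learnable parameters.

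\emph{Sufficiency.} Assume \(\psi\) has the form \eqref{eq:nonlinear_ce_property} for some base function \(\phi\). I will evaluate \(\psi\) at \(\mathbf d^{*}\) and use the involution to replace \((\mathbf d^{*})^{*}\) by \(\mathbf d\):
\begin{equation*}
\psi(\mathbf d^{*})=\tfrac12\left[\phi(\mathbf d^{*})+\phi(\mathbf d)^{*}\right].
\end{equation*}
Separately, I will conjugate \eqref{eq:nonlinear_ce_property} term by term:
\begin{equation*}
\psi(\mathbf d)^{*}=\tfrac12\left[\phi(\mathbf d)^{*}+\phi(\mathbf d^{*})\right].
\end{equation*}
The two right-hand sides have the same two summands in swapped order, so \eqref{eq:psi_ce_property} holds.

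\emph{Necessity.} Assume \(\psi\) satisfies \eqref{eq:psi_ce_property}. I will take \(\phi=\psi\) as the base function. Then
\begin{equation*}
\tfrac12\left[\psi(\mathbf d)+\psi(\mathbf d^{*})^{*}\right]=\tfrac12\left[\psi(\mathbf d)+\left(\psi(\mathbf d)^{*}\right)^{*}\right]=\psi(\mathbf d),
\end{equation*}
which is exactly the representation \eqref{eq:nonlinear_ce_property}.

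I do not expect a real obstacle, since both directions are one-line identities. The only point that needs care is the interpretation of ``there exists a base function.'' The existence claim must range over arbitrary, not necessarily CE, functions \(\phi\); otherwise the construction would not be useful. I would add a remark that the map \(\phi\mapsto\tfrac12[\phi(\cdot)+\phi((\cdot)^{*})^{*}]\) is a projection onto the CE functions: it is idempotent and fixes every CE function. This remark motivates its later use to convert existing equivariant layers without adding learnable parameters.
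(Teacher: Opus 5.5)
Your proof is correct and matches the paper's argument: sufficiency by evaluating $\psi(\mathbf d^{*})$ and using that conjugation is an involution, and necessity by choosing $\phi=\psi$. Your closing remark, that the symmetrization map is an idempotent projection onto CE functions, is valid but not needed for the proposition.
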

\begin{IEEEproof}
See Appendix~\ref{appendix__nonlinear_ce}.
\end{IEEEproof}

Proposition~\ref{proposition_nonlinear_ce} provides a general way that convert an arbitrary base function \(\phi(\cdot)\) into a CE processing function though~\eqref{eq:nonlinear_ce_property}. Consequently, existing nonlinear processing functions, including those in~\cite{Precoder_Learning,Precoder_Learning_in_RIS,A_Model_Based,A_Size_Generalizable}, can serve as \(\phi(\cdot)\) to construct nonlinear CE processing functions.

Existing nonlinear processing functions often preserve additional equivariance properties, including PE, UE, and TE. The following proposition shows that applying the construction in Proposition~\ref{proposition_nonlinear_ce} preserves the function’s original equivariance property while additionally enforcing CE.

\begin{proposition}
\label{proposition_joint_ce_equivariance}
Suppose that \(\phi(\cdot)\) satisfies the equivariance property
\begin{equation}
    \phi\big(\boldsymbol{\Pi}\mathbf d\big)=\boldsymbol{\Pi}\phi(\mathbf d),
    \label{eq:equivariance_property}
\end{equation}
where \(\boldsymbol{\Pi}\in\mathbb{C}^{C^{\prime}\times C^{\prime}}\) is a valid transformation matrix. Assume that the conjugated matrices
\(\boldsymbol{\Pi}^{*}\) is also
valid transformation matrix of the same
type. Then the function
\(\psi(\cdot)\) defined in~\eqref{eq:nonlinear_ce_property} satisfies
both CE and the equivariance property in~\eqref{eq:equivariance_property},
i.e.,
\begin{equation}
    \psi\big(\boldsymbol{\Pi}\mathbf d^{*}\big)
    =
    \boldsymbol{\Pi}\psi(\mathbf d)^{*}.
    \label{eq:joint_ce_equivariance_property}
\end{equation}
\end{proposition}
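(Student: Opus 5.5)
The plan is a direct computation from the definition~\eqref{eq:nonlinear_ce_property}, split into two steps. First, CE of \(\psi\) is already supplied by the ``if'' direction of Proposition~\ref{proposition_nonlinear_ce}. To make this self-contained, I would substitute \(\mathbf d^{*}\) into~\eqref{eq:nonlinear_ce_property}:
\begin{equation*}
\psi(\mathbf d^{*})=\tfrac12\big[\phi(\mathbf d^{*})+\phi(\mathbf d)^{*}\big]=\Big(\tfrac12\big[\phi(\mathbf d^{*})^{*}+\phi(\mathbf d)\big]\Big)^{*}=\psi(\mathbf d)^{*}.
\end{equation*}
The middle step uses only \((\mathbf x^{*})^{*}=\mathbf x\).

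Second, I would show that \(\psi\) inherits the equivariance~\eqref{eq:equivariance_property}, i.e., \(\psi(\boldsymbol\Pi\mathbf d)=\boldsymbol\Pi\psi(\mathbf d)\).
\begin{itemize}
\item The first term of \(\psi(\boldsymbol\Pi\mathbf d)\) is \(\phi(\boldsymbol\Pi\mathbf d)\), which equals \(\boldsymbol\Pi\phi(\mathbf d)\) directly by~\eqref{eq:equivariance_property}.
\item The second term is \(\phi\big((\boldsymbol\Pi\mathbf d)^{*}\big)^{*}=\phi(\boldsymbol\Pi^{*}\mathbf d^{*})^{*}\). Here I invoke the assumption that \(\boldsymbol\Pi^{*}\) is a valid transformation of the same type, so~\eqref{eq:equivariance_property} applies with \(\boldsymbol\Pi^{*}\). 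This gives \(\phi(\boldsymbol\Pi^{*}\mathbf d^{*})=\boldsymbol\Pi^{*}\phi(\mathbf d^{*})\), and conjugating yields \(\boldsymbol\Pi\,\phi(\mathbf d^{*})^{*}\).
\end{itemize}
Adding the two terms and using linearity of multiplication by \(\boldsymbol\Pi\) gives \(\psi(\boldsymbol\Pi\mathbf d)=\boldsymbol\Pi\psi(\mathbf d)\).

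Finally, I combine the two properties to obtain~\eqref{eq:joint_ce_equivariance_property}. Applying equivariance with input \(\mathbf d^{*}\) and then CE gives
\begin{equation*}
\psi(\boldsymbol\Pi\mathbf d^{*})=\boldsymbol\Pi\psi(\mathbf d^{*})=\boldsymbol\Pi\psi(\mathbf d)^{*}.
\end{equation*}

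The only delicate point is the second term in the equivariance step. Conjugation turns \(\boldsymbol\Pi\) into \(\boldsymbol\Pi^{*}\), so without the closure-under-conjugation hypothesis \(\phi\) cannot be applied equivariantly there. I would remark that this hypothesis holds for the cases of interest:
\begin{itemize}
\item permutation matrices are real, so \(\boldsymbol\Pi^{*}=\boldsymbol\Pi\);
\item the conjugate of a unitary matrix is unitary;
\item the conjugate of a diagonal phase-rotation (torus) matrix is again a diagonal phase rotation.
\end{itemize}
Hence PE, UE and TE are all preserved.

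I would also note a dimension issue: \eqref{eq:equivariance_property} implicitly lets \(\boldsymbol\Pi\) act on both the input and output spaces. I would treat the input-side and output-side actions as the same group element, which is the standard convention for these layers.
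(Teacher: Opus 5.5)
Your proof is correct and rests on the same key step as the paper's proof: applying the equivariance of \(\phi\) with \(\boldsymbol{\Pi}^{*}\) to the conjugated branch, via \((\boldsymbol{\Pi}\mathbf x)^{*}=\boldsymbol{\Pi}^{*}\mathbf x^{*}\). The only difference is organizational. The paper computes \(\psi(\boldsymbol{\Pi}\mathbf d^{*})\) in one chain of equalities, whereas you first prove CE and plain \(\boldsymbol{\Pi}\)-equivariance of \(\psi\) separately and then compose them, which has the minor benefit of stating explicitly that \(\psi\) keeps the original equivariance on its own (the paper's combined identity gives this only after one further use of CE).
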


\begin{IEEEproof}
See Appendix~\ref{appendix_joint_ce_equivariance}.
\end{IEEEproof}

Based on Proposition~\ref{proposition_joint_ce_equivariance}, processing functions from existing equivariant networks can serve as the base function \(\phi(\cdot)\) in~\eqref{eq:nonlinear_ce_property}. The resulting CE-enhanced processing functions retain their original equivariance properties while also satisfying CE

\subsubsection{Design of Activation Function \(\sigma(\cdot)\)}
To ensure that each layer is conjugation equivariant, the activation function should satisfy
\begin{equation}
    \sigma(x^{*})
    =
    \sigma(x)^{*},
    \quad \forall x\in\mathbb C.
    \label{eq:ce_activation_condition}
\end{equation}
We therefore adopt the following activation function~\cite{Precoder_Learning}:
\begin{equation}
    \sigma(x)
    =
    \frac{x}{1+|x|}. 
    \label{eq:ce_activation}
\end{equation}
It is straightforward to verify that this activation function
satisfies the CE property and compatible with PE, UE, and TE property.

Finally, combining the proposed conjugation-equivariant processing function in~\eqref{eq:nonlinear_ce_property} and activation functions in~\eqref{eq:ce_activation} yields the CENN layer update in~\eqref{eq:layer_update}.

\section{Simulation Results}
\label{sec:simulation_results}
In this section, we evaluate the proposed CE
construction for learning precoding policies. Specifically, we apply the CE construction to existing nonlinear processing functions and compare the resulting CE-enhanced CENNs with their original non-CE counterparts. In simulations, we use \(100{,}000\) channel samples for training and another \(1{,}000\) channel samples for~testing. 

\subsection{Learning Fully Digital Precoder}
In this subsection, we evaluate the proposed CE construction for learning the fully digital precoding policy by comparing each baseline network with its CE-enhanced variants .

\begin{itemize}
      \item \textbf{WMMSE}: An iterative algorithm for sum-rate maximization in multiuser precoding systems~\cite{An_Iteratively_Weighted}, which can find at least a local optimal precoder.

    \item \textbf{UPNN}: An equivariant neural network proposed in~\cite{Precoder_Learning} that employs a nonlinear processing function satisfying both UE and PE.

    \item \textbf{MGNN}: A model-based GNN proposed in~\cite{A_Model_Based} with a nonlinear processing function that follows a Taylor expansion of the matrix pseudoinverse and satisfies PE.

    \item \textbf{GAT}: An attention-based GNN proposed in~\cite{A_Size_Generalizable} that employs a nonlinear processing function designed to satisfy PE and incorporate an attention mechanism.

    \item \textbf{CUPNN, CMGNN, and CGAT}:  These are the CE-enhanced variants of UPNN, MGNN, and GAT, respectively. Each uses the corresponding processing function as the base function $\phi(\cdot)$ in~\eqref{eq:nonlinear_ce_property}, thereby retaining its original equivariance properties while also satisfying CE.
\end{itemize}

All learning-based methods are trained in an unsupervised manner using the negative of the objective function in~\eqref{eq:P1_objective} as the loss, where
$f_{k,j}(\mathcal H,\mathcal W)=\mathbf h_k^{\mathsf H}\mathbf w_j$. The numbers of users, numbers of BS antennas, and the signal-to-noise ratio (SNR) are set to \(K=16\), \(N_t=32\), and \(10\) dB, respectively.

Fig.~\ref{SR_Full_Digital} evaluates learning performance under different numbers of training samples, where the performance metric is defined as the ratio of the sum rate achieved by each learning-based method to that achieved by WMMSE. Comparing UPNN, MGNN, and GAT with their CE-enhanced variants shows that enforcing CE consistently improves performance. The gain of CMGNN over MGNN is particularly pronounced when training samples are limited. CUPNN achieves the highest performance across all training-set sizes, which can be attributed to its joint exploitation of CE, PE, and UE.

\begin{figure}[htbp]
\centering
 \includegraphics[width=0.45\textwidth]{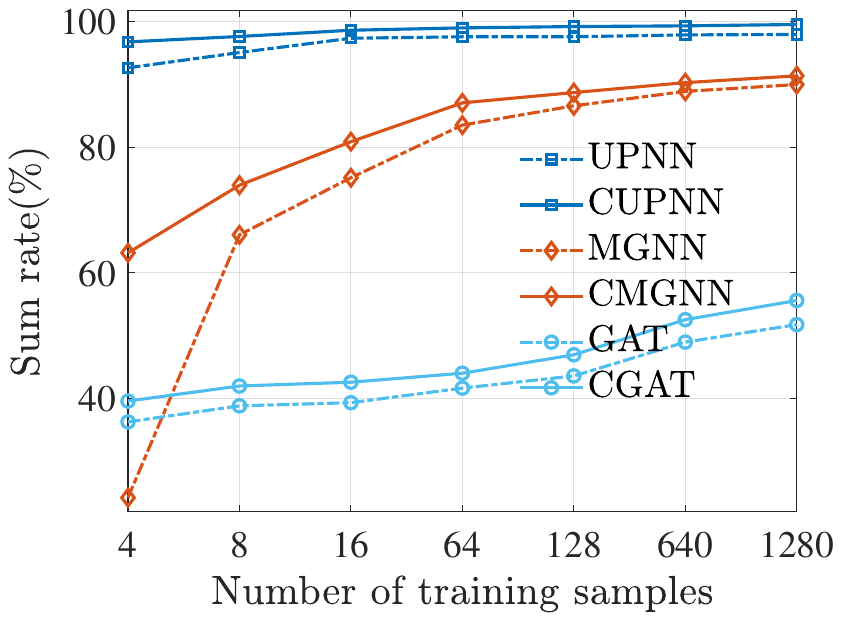}
\caption{Learning performance under different numbers of training samples.}  \label{SR_Full_Digital}
\end{figure}

Table~\ref{SR_Full_Generalization} evaluates generalization across different numbers of users and antennas. All methods are trained with \(K=16\) and \(N_t=32\) and then tested at different system sizes. Comparing UPNN, MGNN, and GAT with their CE-enhanced variants shows that enforcing CE generally improves generalization with respect to both the antenna dimension and the user dimension. The gain of CUPNN over UPNN is particularly pronounced when generalizing to systems with more antennas and users.

\begin{table}[htbp]	
\captionsetup{font=small}
\renewcommand{\arraystretch}{1.2}
\centering 
\caption{Generalization Performance $(\%)$}	
\begin{tabular}{c|c|c|c|c}
\hline             
$(K, N_t)$      & $(2, 32)$   & $(32, 32)$  & $(16, 16)$ &  $(16, 48)$      \\ 
\cline{1-5}
\textbf{UPNN}   & 97.96       & 89.41       & 73.71      & 86.64  \\
\cline{1-5}
\textbf{CUPNN}  & 99.87       & 95.15       & 82.42      & 94.84  \\ 
\cline{1-5}
\textbf{MGNN}   & 64.77       & 68.58       & 41.45      & 86.15  \\ 
\cline{1-5}
\textbf{CMGNN}  & 74.38       & 70.65       & 47.4       & 94.84  \\ 
\cline{1-5}
\textbf{GAT}    & 67.39       & 38.20       & 11.22      & 28.64  \\ 
\cline{1-5}
\textbf{CGAT}   & 71.76       & 42.56       & 24.83      & 29.05 \\ 
\hline
\end{tabular}
\begin{minipage}{0.95\linewidth}
\end{minipage}
\label{SR_Full_Generalization}
\end{table}

Table~\ref{SR_Full_Complexity} compares the inference latency and training complexity of the considered methods. The training complexity is measured by the number of samples, training time, and number of model parameters required to reach \(90\%\) of the sum rate attained by WMMSE. CUPNN requires the fewest samples, the shortest training time, and the lowest space complexity, which can be attributed to its joint exploitation of CE, UE, and PE. The CE-enhanced variants incur slightly higher inference latency because of the additional computation in~\eqref{eq:nonlinear_ce_property}, but they reduce the sample and time complexities without increasing the space complexity. As neither GAT nor CGAT reaches the target performance, their training complexities are not~reported.

\begin{table}[htbp]	
\captionsetup{font=small}
\renewcommand{\arraystretch}{1.2}
\centering 
\caption{Inference Time and Training Complexity}	
\begin{tabular}{c|c|c|c|c}
\hline
\multirow{2}{*}{Name} & \multirow{2}{*}{Inference time} & \multicolumn{3}{c}{Training Complexity} \\
\cline{3-5}
                    &              & Sample     & Time        & Space   \\ \hline             
\textbf{UPNN}      & 2.65 ms       & 4          & 3.32 s      & 2.73 K    \\ 
\cline{1-5}
\textbf{CUPNN}     & 2.79 ms       & 2          & 2.67 s      & 2.73 K    \\
\cline{1-5}
\textbf{MGNN}     & 2.42 ms        & 1.35 K      & 36.34 s     & 51.07 K   \\ 
\cline{1-5}
\textbf{CMGNN}    & 2.61 ms        & 950        & 23.89 s     & 51.07 K   \\ 
\cline{1-5}
\textbf{GTA}     &  1.57 ms        &   --       & --          & --      \\ 
\cline{1-5}
\textbf{CGTA}    &  1.72ms         &   --       &  --         & --      \\ 
\hline
\end{tabular}
\begin{minipage}{0.95\linewidth}
\quad\footnotesize \textit{Note:} ``K'' and ``M'' represent thousand and million, respectively.
\end{minipage}
\label{SR_Full_Complexity}
\end{table}

\subsection{Learning RIS-Aided Precoding}
In this subsection, we evaluate the proposed CE construction for RIS-aided precoder learning by comparing each baseline network with its CE-enhanced variant.

\begin{itemize}
    \item \textbf{BCD}: A near-optimal numerical baseline that jointly optimizes precoders $\mathbf{W}$ and $\mathbf{w}_{\rm e}$ using BCD methods~\cite{Weighted_Sum_Rate}.

    \item \textbf{UTPNN}: An equivariant DNN proposed in~\cite{Precoder_Learning_in_RIS} that uses a nonlinear processing function to jointly preserve unitary, torus, and permutation equivariance.

    \item \textbf{PNN}: A GNN proposed in~\cite{Learning_Beamforming_for} that uses a nonlinear processing function only satisfying PE.

    \item \textbf{CUTPNN and CPNN}: These are the CE-enhanced variants of UTPNN and PNN, respectively. Each uses the corresponding processing function as the base function $\phi(\cdot)$ in~\eqref{eq:nonlinear_ce_property}, thereby retaining its original equivariance properties while also satisfying CE
\end{itemize}

All learning-based methods are trained in an unsupervised manner using the negative of the objective function in~\eqref{eq:P1_objective} as the loss, where
$f_{k,j}(\mathcal H,\mathcal W)=\left(\mathbf h_{d,k}+\mathbf A_k\mathbf w_{\rm e}^{*}\right)^{\mathsf H}\mathbf w_j$. The number of users, number of BS
antennas, number of reflecting elements, and SNR are set to \(K=8\), \(N_t=16\), $N_{\rm e}=100$, and  \(-5\) dB respectively.

Fig.~\ref{SR_RIS} evaluates learning performance across different numbers of training samples. The performance metric is defined as the ratio of the sum rate achieved by each learning-based method to that achieved by BCD. Comparing UTPNN and PNN with their CE-enhanced variants shows that enforcing CE consistently improves performance. CUTPNN achieves the highest performance across all training-set sizes, which can be attributed to its joint exploitation of CE, PE, UE, and~TE.

\begin{figure}[htbp]
\centering
 \includegraphics[width=0.45\textwidth]{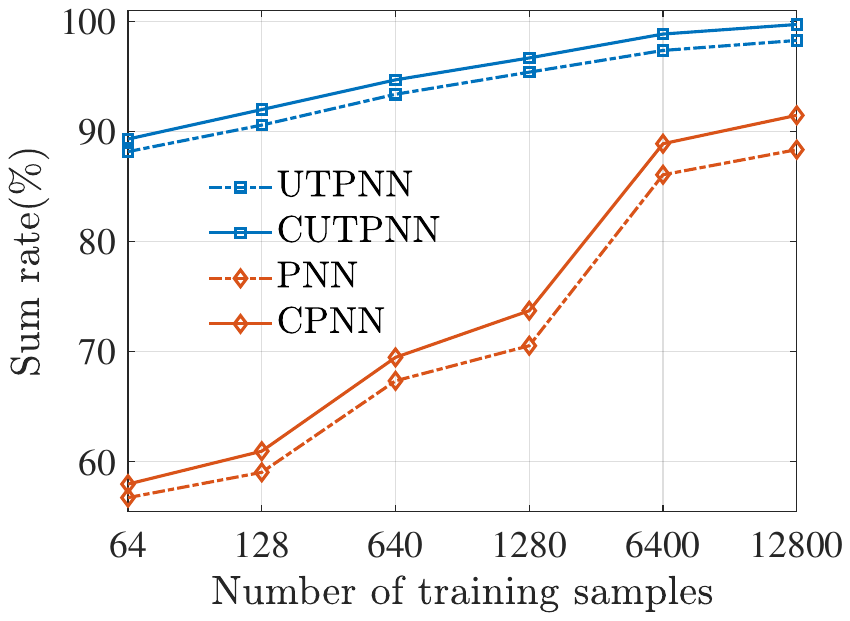}
\caption{Learning performance under different numbers of training~samples.}  \label{SR_RIS}
\end{figure}

Table~\ref{SR_RIS_Generalization} evaluates generalization across different numbers of users, BS antennas, and reflecting elements. All methods are trained with \(K=8\), \(N_t=16\) and $N_{\rm e}=100$, and then tested at different system sizes. Comparing UTPNN and PNN with their CE-enhanced variants shows that enforcing CE generally improves generalization along all three dimensions.

\begin{table}[htbp]	
\captionsetup{font=small}
\renewcommand{\arraystretch}{1.2}
\centering 
\caption{Generalization Performance $(\%)$}	
\begin{tabular}{c|c|c|c|c}
\hline             
$(K, N_t, N_{\rm e})$      & \textbf{UTPNN}   & \textbf{CUTPNN}  & \textbf{PNN} &  \textbf{CPNN}      \\ 
\cline{1-5}
$(4, 16, 100)$   & 86.72       & 92.16       & 73.85      & 77.78  \\
\cline{1-5}
$(12, 16, 100)$  & 94.43       & 97.52       & 78.51      & 81.28  \\ 
\cline{1-5}
$(8, 8, 100)$   & 96.18       & 97.09       & 87.52      & 91.06  \\ 
\cline{1-5}
$(8, 24, 100)$  & 95.96       & 96.78       & 85.66       & 88.90  \\ 
\cline{1-5}
$(8, 16, 80)$   & 96.46       & 98.40       & 85.96        & 89.12  \\ 
\cline{1-5}
$(8, 16, 120)$  & 97.23       & 98.95       & 87.92       & 90.69  \\ 
\hline
\end{tabular}
\begin{minipage}{0.95\linewidth}
\end{minipage}
\label{SR_RIS_Generalization}
\end{table}

Table~\ref{RIS_Complexity} compares the inference latency and training complexity of the considered methods. The training complexity is measured by the number of training samples, training time, and number of model parameters required to reach \(90\%\) of the sum rate attained by BCD. CUTPNN requires the fewest samples, the shortest training time, and the lowest space complexity, which can be attributed to its joint exploitation of CE, UE, TE, and PE. The CE-enhanced variants incur slightly higher inference latency because of the additional computation in~\eqref{eq:nonlinear_ce_property}, but they reduce the sample and time complexities without increasing the space complexity.

\begin{table}[htbp]	
\captionsetup{font=small}
\renewcommand{\arraystretch}{1.2}
\centering 
\caption{Inference Time and Training Complexity}	
\begin{tabular}{c|c|c|c|c}
\hline
\multirow{2}{*}{Name} & \multirow{2}{*}{Inference time} & \multicolumn{3}{c}{Training Complexity} \\
\cline{3-5}
                    &              & Sample     & Time        & Space   \\ \hline             
\textbf{UTPNN}      & 6.17 ms       & 125          & 19.64 s      & 75.94 K    \\ 
\cline{1-5}
\textbf{CUTPNN}     & 6.64 ms       & 85          & 14.81 s      & 75.94 K    \\
\cline{1-5}
\textbf{PNN}     & 7.64 ms        & $>$12.8 K      & 1.52 h     & 4.56 M   \\ 
\cline{1-5}
\textbf{CPNN}    & 10.73 ms        & 9.4 K        & 0.45 h     & 4.56 M   \\ 
\hline
\end{tabular}
\begin{minipage}{0.95\linewidth}
\end{minipage}
\label{RIS_Complexity}
\end{table}

\vspace{-0.2cm}
\section{Conclusions}
This paper investigated CE in precoding policies and developed a general CENN framework. We showed that directly enforcing CE on DNNs with linear processing functions restricts their ability to learn optimal precoding policies. To overcome this limitation, we developed a general nonlinear construction that converts an arbitrary base function into a CE processing function while preserving its original equivariance properties. Applying this construction to existing equivariant networks yielded CE-enhanced DNNs without introducing additional learnable parameters. Simulations demonstrated that these DNNs outperform their original counterparts in learning and generalization performance while requiring fewer training samples and shorter training time.
   
\appendices
\numberwithin{equation}{section} 
\section{Proof of Proposition~\ref{proposition:CE_property}}
\label{appendix:CE_property}
We prove the optimality by contradiction. Suppose that for problem~\eqref{eq:unified_problem}, \(\hat{\mathcal W}\) is an optimal solution to the channel matrix
\(\hat{\mathcal H}\), whereas \(\hat{\mathcal W}^{*}\) is not optimal to the conjugated channel matrix \(\hat{\mathcal H}^{*}\). Then there exists a feasible
precoder \(\widetilde{\mathcal V}\) such that
\begin{equation}
    R_{\mathrm{sum}}(\hat{\mathcal H}^{*},\widetilde{\mathcal W})
    >
    R_{\mathrm{sum}}(\hat{\mathcal H}^{*},\hat{\mathcal W}^{*}),
    \label{eq:contradiction}
\end{equation}
where \(R_{\mathrm{sum}}(\mathcal H,\mathcal W)\) denotes the objective
function in~\eqref{eq:P1_objective}.

Based on the CE property~\eqref{eq:unified_ce_condition}, we can find that \(R_{\mathrm{sum}}(\mathcal H,\mathcal W)\) is invariant under the joint conjugation of the channel and precoder. By conjugating the input of the left- and right-hand terms in~\eqref{eq:contradiction}, they can be equivalently rewritten~as
\begin{subequations}
\begin{align}
    R_{\mathrm{sum}}(\hat{\mathcal H}^{*}, \widetilde{\mathcal W}) &=R_{\mathrm{sum}}(\hat{\mathcal H}, \widetilde{\mathcal W}^{*}), \\
   R_{\mathrm{sum}}(\hat{\mathcal H}^{*},\hat{\mathcal W}^{*})&=R_{\mathrm{sum}}(\hat{\mathcal H}, \hat{\mathcal W}).
\end{align}
\label{eq:sum_rate_invariance}
\end{subequations}
Combining~\eqref{eq:contradiction} and~\eqref{eq:sum_rate_invariance}
yields
\begin{equation}
    R_{\mathrm{sum}}(\hat{\mathcal H},\widetilde{\mathcal W}^{*})
    >
    R_{\mathrm{sum}}(\hat{\mathcal H}, \hat{\mathcal W}),
\end{equation}
which contradicts the optimality of \(\hat{\mathcal W}\) for \(\hat{\mathcal H}\). Hence, \(\hat{\mathcal W}^{*}\) is an optimal for \(\hat{\mathcal H}^{*}\), which completes the proof.

\section{Proof of Proposition~\ref{proposition_nonlinear_ce}}
\label{appendix__nonlinear_ce}

We first prove sufficiency. For any \(\phi(\cdot)\), define
\(\psi(\cdot)\) as in~\eqref{eq:nonlinear_ce_property}. Then
\begin{equation}
    \psi(\mathbf d^{*})
    =
    \frac{1}{2}
    \left[
        \phi(\mathbf d^{*})
        +
        \phi(\mathbf d)^{*}
    \right]
    =
    \psi(\mathbf d)^{*},
\end{equation}
which shows that \(\psi(\cdot)\) is CE.

We then prove necessity. Suppose that \(\psi(\cdot)\) is CE and choose
\(\phi(\cdot)=\psi(\cdot)\). Then
\begin{equation}
    \frac{1}{2}
    \left[
        \phi(\mathbf d)
        +
        \phi(\mathbf d^{*})^{*}
    \right]
    =
    \frac{1}{2}
    \left[
        \psi(\mathbf d)
        +
        \psi(\mathbf d^{*})^{*}
    \right]
    =
    \psi(\mathbf d),
\end{equation}
where the last equality follows from
\(\psi(\mathbf d^{*})^{*}=\psi(\mathbf d)\). Hence,
\eqref{eq:nonlinear_ce_property} holds for some \(\phi(\cdot)\). This
completes the proof.

\section{Proof of Proposition~\ref{proposition_joint_ce_equivariance}}
\label{appendix_joint_ce_equivariance}

Since \(\boldsymbol{\Pi}^{*}\) is also a valid transformation matrix of
the same type, \(\phi(\cdot)\) satisfies
\begin{equation}
    \phi(\boldsymbol{\Pi}^{*}\mathbf d)
    =
    \boldsymbol{\Pi}^{*}\phi(\mathbf d).
    \label{eq:equivariance_property_conj}
\end{equation}
Using~\eqref{eq:nonlinear_ce_property}, we have
\begin{align}
    \psi(\boldsymbol{\Pi}\mathbf d^{*})
    &=
    \frac{1}{2}
    \left[
        \phi(\boldsymbol{\Pi}\mathbf d^{*})
        +
        \phi(\boldsymbol{\Pi}^{*}\mathbf d)^{*}
    \right] \\
    &=
    \frac{1}{2}
    \left[
        \boldsymbol{\Pi}\phi(\mathbf d^{*})
        +
        \big(\boldsymbol{\Pi}^{*}\phi(\mathbf d)\big)^{*}
    \right] \\
    &=
    \boldsymbol{\Pi}
    \frac{1}{2}
    \left[
        \phi(\mathbf d^{*})
        +
        \phi(\mathbf d)^{*}
    \right] =
    \boldsymbol{\Pi}\psi(\mathbf d)^{*},
\end{align}
where the second equality follows from
\eqref{eq:equivariance_property} and
\eqref{eq:equivariance_property_conj}. This proves
\eqref{eq:joint_ce_equivariance_property} and completes the proof.

\bibliographystyle{IEEEtran}
\bibliography{main}

@inproceedings{Learn_to_Optimize,
  author={Chen, Shiyong and Dai, Yuwei and Han, Shengqian},
  title={Learn to Optimize Resource Allocation under {QoS} Constraint of {AR}}, 
  booktitle={Proc. IEEE GLOBECOM}, 
  year={2025}
}

@INPROCEEDINGS{Precoder_Learning,
  author={Ge, Yilun and Liao, Shuyao and Han, Shengqian and Yang, Chenyang},
  booktitle={Proc. IEEE GLOBECOM}, 
  title={Precoder Learning by Leveraging Unitary Equivariance Property}, 
  year={2025}}

@INPROCEEDINGS{Precoder_Learning_in_RIS,
  author={Deng, Mingyu and Han, Shengqian},
  booktitle={Proc. IEEE GLOBECOM}, 
  title={Precoder Learning in {RIS}-Aided Systems by Leveraging Torus Equivariance Property}, 
  year={2026}}

@InProceedings{Equivariance_Through,
  title = 	 {Equivariance Through Parameter-Sharing},
  author =       {Siamak Ravanbakhsh and Jeff Schneider and Barnab{\'a}s P{\'o}czos},
  booktitle = 	 {Proc. ICML},
  year = 	 {2017}
}

@INPROCEEDINGS{A_Size_Generalizable,
  author={Guo, Jia and Yang, Chenyang},
  booktitle={Proc. IEEE VTC-Fall}, 
  title={A Size-Generalizable {GNN} for Learning Precoding}, 
  year={2023}}

@INPROCEEDINGS{Learning_Beamforming_for,
  author={Zhao, Baichuan and Yang, Chenyang},
  booktitle={Proc. IEEE VTC-Spring}, 
  title={Learning Beamforming for {RIS}-aided Systems with Permutation Equivariant Graph Neural Networks}, 
  year={2023}}

@ARTICLE{Learning_to_Optimize,
  author={Sun, Haoran and Chen, Xiangyi and Shi, Qingjiang and Hong, Mingyi and Fu, Xiao and Sidiropoulos, Nicholas D.},
  journal={IEEE Trans. Signal Process.}, 
  title={Learning to Optimize: {T}raining Deep Neural Networks for Interference Management}, 
  year={2018},
  month={Sep.},
  volume={66},
  number={20},
  pages={5438-5453}}

@ARTICLE{A_Model_Based,
  author={Guo, Jia and Yang, Chenyang},
  journal={IEEE Trans. Wireless Commun.}, 
  title={A Model-Based {GNN} for Learning Precoding}, 
  year={2024},
  month={Dec.},
  volume={23},
  number={7},
  pages={6983-6999}}

@ARTICLE{Weighted_Sum_Rate,
  author={Guo, Huayan and Liang, Ying-Chang and Chen, Jie and Larsson, Erik G.},
  journal={IEEE Trans. Wireless Commun.}, 
  title={Weighted Sum-Rate Maximization for Reconfigurable Intelligent Surface Aided Wireless Networks}, 
  year={2020},
  month={May},
  volume={19},
  number={5},
  pages={3064-3076}}

@ARTICLE{An_Iteratively_Weighted,
  author={Shi, Qingjiang and Razaviyayn, Meisam and Luo, Zhi-Quan and He, Chen},
  journal={IEEE Trans. Signal Process.}, 
  title={An Iteratively Weighted {MMSE} Approach to Distributed Sum-Utility Maximization for a {MIMO} Interfering Broadcast Channel}, 
  year={2011},
  month={Sep.},
  volume={59},
  number={9},
  pages={4331-4340}}

@ARTICLE{Optimal_Structure,
  author={Björnson, Emil and Bengtsson, Mats and Ottersten, Björn},
  journal={IEEE Signal Process. Mag.}, 
  title={Optimal Multiuser Transmit Beamforming: {A} Difficult Problem with a Simple Solution Structure}, 
  year={2014},
  month={Jul.},
  volume={31},
  number={4},
  pages={142-148}
  }

@ARTICLE{Understanding_the_Performance,
  author={Zhao, Baichuan and Guo, Jia and Yang, Chenyang},
  journal={IEEE Trans. Commun.}, 
  title={Understanding the Performance of Learning Precoding Policies With Graph and Convolutional Neural Networks}, 
  year={2024},
  month={Sep.},
  volume={72},
  number={9},
  pages={5657-5673}}

@ARTICLE{What_is_an,
      title={What is an equivariant neural network?}, 
      author={Lek-Heng Lim and Bradley J. Nelson},
      year={2022},
     journal={arXiv:2205.07362} 
}

@ARTICLE{MDGNN,
  author={Liu, Shengjie and Guo, Jia and Yang, Chenyang},
  journal={IEEE Trans. Wireless Commun.}, 
  title={Multidimensional graph neural networks for wireless communications}, 
  month={Aug.},
  year={2024},
  volume={23},
  number={4},
  pages={3057--3073}}
\end{document}